\newtheorem{theorem}{Theorem}
\newtheorem{lemma}{Lemma}
\newcommand{\ubar}[1]{\underaccent{\bar}{#1}}
\DeclareMathOperator{\conv}{conv}
\newlist{abbrv}{itemize}{1}
\setlist[abbrv,1]{label=,labelwidth=0.5in,align=parleft,itemsep=0.1\baselineskip,leftmargin=!}
\begin{document}
%
\title{Strong Mixed-Integer Formulations for Transmission Expansion Planning with FACTS Devices}

\author{
\IEEEauthorblockN{Kevin Wu, Mathieu Tanneau, Pascal Van Hentenryck}
\IEEEauthorblockA{
    Georgia Institute of Technology, Atlanta, GA, United States \\
    kwu381@gatech.edu, \{mathieu.tanneau, pascal.vanhentenryck\}@isye.gatech.edu}
}


\maketitle

\begin{abstract}
    Transmission Network Expansion Planning (TNEP) problems find the most economical way of expanding a given grid given long-term growth in generation capacity and demand patterns.
    The recent development of Flexible AC Transmission System (FACTS) devices, which can dynamically re-route power flows by adjusting individual branches' impedance, call for their integration into TNEP problems.
    However, the resulting TNEP+FACTS formulations are significantly harder to solve than traditional TNEP instances, due to the nonlinearity of FACTS behavior.
    This paper proposes a new mixed-integer formulation for TNEP+FACTS, which directly represents the change in power flow induced by individual FACTS devices.
    The proposed formulation uses an extended formulation and facet-defining constraints, which are stronger than big-M constraints typically used in the literature.
    The paper conducts numerical experiments on a synthetic model of the Texas system with high renewable penetration.
    The results demonstrate the computational superiority of the proposed approach, which achieves a 4x speedup over state-of-the-art formulations, and highlight the potential of FACTS devices to mitigate congestion.
\end{abstract}

\begin{IEEEkeywords}
    transmission expansion planning, FACTS
\end{IEEEkeywords}

\thanksto{\noindent This research is partly funded by NSF award 2112533 and 2330450 and ARPA-E PERFORM award AR0001136.}

\section*{Nomenclature}
\label{sec:nomenclature}

\newcommand{\db}{\delta B}
\newcommand{\dbhi}{\bar{\delta B}}
\newcommand{\dblo}{\ubar{\delta B}}
\newcommand{\dva}{{\Delta \theta}}
\newcommand{\dvahi}{\bar{\Delta \theta}}
\newcommand{\dvalo}{\ubar{\Delta \theta}}

\paragraph{Sets}

\renewcommand{\S}{\mathcal{S}}
\newcommand{\N}{\mathcal{N}}
\newcommand{\E}{\mathcal{E}}
\newcommand{\T}{\mathcal{T}}

\begin{abbrv}
    \item[$\N$] Set of buses; $\N = \{1, ..., N\}$
    \item[$\E$] Set of branches; $\E = \{1, ..., E\}$
    \item[$\S$] Set of scenarios; $\S = \{1, ..., S\}$
\end{abbrv}

\paragraph{Parameters}

\newcommand{\pd}{\mathbf{p}^{\text{d}}}
\newcommand{\pgmin}{\mathbf{\ubar{p}}^{\text{g}}}
\newcommand{\pgmax}{\mathbf{\bar{p}}^{\text{g}}}
\newcommand{\pfmax}{\mathbf{\bar{p}}^{\text{f}}}
\newcommand{\dvamin}{\ubar{\theta}}
\newcommand{\dvamax}{\bar{\theta}}

\newcommand{\df}{\delta \pf}
\newcommand{\dX}{\delta X}
\newcommand{\dXmin}{\ubar{\delta X}}
\newcommand{\dXmax}{\bar{\delta X}}
\newcommand{\dB}{\delta B}
\newcommand{\dBmin}{\ubar{\delta B}}
\newcommand{\dBmax}{\bar{\delta B}}

\begin{abbrv}
    \item[$\pd_{i,s}$] Load at node $i \in \N$ in scenario $s \in \S$
    \item[$\pgmin_{i,s}, \pgmax_{i,s}$] min/max output of generator $i \, {\in} \, \N$ in scenario $s \, {\in} \, \S$
    \item[$c^{\text{cap}}_{ij}$] Line capacity upgrade cost of edge $ij \in \E$
    \item[$c^{\text{TCSC}}_{ij}$] TCSC installation cost on edge $ij \in \E$
    \item[$c_{i}$] Production cost function of generator $i$
    \item[$\lambda$] Power balance violation penalty cost
    \item[$\Delta^c$] Capacity upgrade increment (MW) of branches
    \item[$m$] Number of capacity upgrade increments available
    \item[$X_{ij}$] Reactance of branch $ij \in \E$
    \item[$\pfmax_{ij}$] Thermal limit of branch $ij \in \E$
    \item[$\dvamin_{ij}, \dvamax_{ij}$] min/max angle difference on branch $ij \in \E$
\end{abbrv}

\paragraph{Variables}

\newcommand{\pg}{\mathbf{p}^{\text{g}}}
\newcommand{\pf}{\mathbf{p}^{\text{f}}}
\newcommand{\va}{\theta}

\begin{abbrv}
    \item[$\gamma_{ij}$] Capacity upgrade level of branch $ij \in \E$
    \item[$\psi_{ij}$] Whether a FACTS device is installed on branch $(i,j) \in \E$
    \item[$\va_{i, s}$] Voltage angle of bus $i \in \N$ in scenario $s \in \S$
    \item[$\pg_{i,s}$] Output of generator $i \in \N$ in scenario $s \in \S$
    \item[$\pf_{ij,s}$] Power flow on branch $ij \in \E$ in scenario $s \in \S$
    \item[$\xi_{i,s}$] Power imbalance at bus $i\in \N$ in scenario $s \in \S$
\end{abbrv}

\section{Introduction}
\label{sec:introduction}

In order to accommodate a high penetration of renewable generation,
especially wind and solar generation, existing power grids need
substantial investments in transmission capacity \cite{RTE}.  This
poses a challenge for transmission system operators (TSOs) who are
responsible for the expansion of infrastructure that delivers power
from generators to loads.  This need naturally gives rise to
Transmission Network Expansion Planning (TNEP) problems, which find
the most economical way of expanding the grid given long-term growth
in generation capacity and demand patterns
\cite{Gideon2019_SurveyTNEP}. TNEP problems are mixed-integer
programming problems, which are notoriously hard to solve
\cite{Gideon2019_SurveyTNEP}.

Recent years have also witnessed the development of so-called Flexible AC
Transmission System (FACTS) devices, such as phase-shifting
transformers (PSTs), Thrystor Controlled Series Compensators (TCSCs),
and Static Synchronous Series Compensators (SSSCs), to name a few
\cite{Mokhtari}.  FACTS devices can dynamically re-route power flows
across the grid by emulating a change in line impedances.  It is
therefore natural to include such technology in TNEP studies, which
can lead to lower investment and operational costs, and improve the
long-term reliability of the grid.  However, TNEP problems with FACTS
devices (TNEP+FACTS for short) have received little attention in the
literature.  {\em This paper addresses this gap by proposing new, improved
mixed-integer linear programming (MILP) formulations for the
TNEP+FACTS problem.}

\subsection{Related Work}
\label{sec:introduction:literature}

Because of its importance for long-term power systems operations,
there is a rich body of works on TNEP problems and their solution,
e.g., \cite{Li,Micheli,Lumbreras,Blanchot}.  TNEP problems may
consider transmission expansion only, or combined generation and
transmission expansion planning \cite{Li}.  Power flow equations are
typically linearized using the DC approximation, which yields MILP
problems.  Stochastic formulations of TNEP, e.g. two-stage stochastic
programming formulations \cite{Micheli,Lumbreras}, are also popular.
Most state-of-the-art approaches use a form of Benders decomposition
algorithm to solve TNEP efficiently.  Readers are referred to
\cite{Gideon2019_SurveyTNEP} for a more exhaustive survey of TNEP
problems.  The paper focuses on TNEP+FACTS problems, which have
received significantly less attention.

Several papers combine traditional TNEP methodologies with the
deployment of FACTS devices
\cite{Mokhtari,de_Araujo,Luburic,Esmaili,Ziaee,Franken}.  Mokhtari et
al. \cite{Mokhtari} consider a distributionally robust formulation
with linearized AC constraints, which is solved with a Benders'
decomposition algorithm.  A differential evolution-based metaheuristic
algorithm is proposed in \cite{de_Araujo} for solving TNEP+FACTS with
AC constraints.  Also in the AC setting, Luburic et al. \cite{Luburic} consider a
TNEP+FACTS formulation with TCSCs and energy storage, and report
results on a 24-bus system.  Esmaili et al. \cite{Esmaili} consider a multi-stage,
short circuit-constrained TNEP problem with TCSC and SFCL devices,
which is executed on 39-bus and 118-bus systems.  It should be noted
that the studies mentioned above were conducted solely on small
artificial test systems which may not capture the dynamics of
real-life grids.

Ziaee et al. \cite{Ziaee} consider the placement of TCSC devices
within a TNEP framework in the DC setting.  The TNEP+FACTS problem is
modeled via a combination of big-M and third-order relaxation
techniques, and experiments are reported on 6-bus and 118-bus
systems.  Franken et al. \cite{Franken} propose an MILP formulation for
TNEP+FACTS where the effect of FACTS devices is modeled as virtual
phase shifts.  The resulting disjunctive constraints are linearized
using big-M constraints.  Numerical results on a synthetic 120-bus
system suggest that FACTS devices have the potential to reduce
expansion costs, renewable curtailment, and load shedding.  One
limitation of the study in \cite{Franken} is that the proposed
TNEP+FACTS model considers only a single load/renewable scenario, which may
fail to capture, e.g., seasonal variations in system conditions.

\subsection{Contributions}
\label{sec:introduction:contribution}

The paper considers the integration of FACTS devices in the
formulation and solution of TNEP problems, and makes the following
contributions.  First, it proposes to formulate for TNEP+FACTS using
variables that directly represents the impact of FACTS devices on
power flows, rather than using virtual angle shifts.  Second, it
introduces a strong MILP formulation for TNEP+FACTS using an extended
formulation with facet-defining inequalities.  Third, it conducts
numerical experiments on the Texas system under a high penetration of
renewable generation. {\em The results demonstrate that the proposed
  MILP formulation achieves significant speedups over current
  state-of-the-art formulations, with a 4x speedup and a 40x reduction
  in branch-and-bound nodes needed to prove optimality.}  The
experiments also provide some insights into the benefits of FACTS
devices for long-term network expansion planning. In particular, {\em
  the results highlight the benefits of FACTS devices for expansion
  planning through flow control.} Together, these results should encourage
subsequent studies to include FACTS devices in their studies and adopt
the strong formulation presented in the paper.

The rest of the paper is organized as follows.  Section
\ref{sec:formulation} describes the TNEP and TNEP+FACTS formulations
considered in the paper, presents the proposed strong MILP formulation
for TNEP+FACTS, and proves its theoretical properties.  Section
\ref{sec:results} presents a case study of TNEP+FACTS on a synthetic model of the Texas
system, which demonstrates the computational advantages of the proposed
formulation, and illustrates the benefits of FACTS devices on a real
system.  Section \ref{sec:conclusion} concludes the paper and
highlights directions for future work.

\section{Formulation}
\label{sec:formulation}

This section presents the TNEP formulations used in the paper.
Section \ref{sec:formulation:TNEP} presents the baseline TNEP formulation, which does not include FACTS devices.
Section \ref{sec:formulation:TNEP:FACTS} presents an initial MILP formulation for TNEP+FACTS, based on \cite{Franken}, which includes FACTS devices like TCSCs.
Section \ref{sec:formulation:TNEP:FACTS:strong} presents the proposed strengthened MILP formulation for TNEP+FACTS, and proves theoretical guarantees for the quality of the formulation.

The presentation assumes, for ease of reading and without loss of generality, that exactly one generator is attached to each bus.
Reference (slack) bus voltage constraints are also omitted for simplicity.

\subsection{Baseline TNEP formulation}
\label{sec:formulation:TNEP}

    Model \ref{model:TNEP} presents the baseline TNEP formulation, whose variables, constraints and objective are described next.
    
    \begin{model}[!t]
        \caption{The TNEP formulation without FACTS devices}
        \label{model:TNEP}
        \begin{align*}
            \text{min} \quad 
                & \sum_{ij \in \E} c^{\text{cap}}_{ij} \gamma_{ij}
                + \frac{1}{S} \sum_{i \in \N, s \in \S} 
                    c_{i}(\pg_{i,s})
                    + \lambda |\xi_{i,s}|
                \\
            \text{s.t.} \quad 
            & \eqref{eq:TNEP:power_balance}-\eqref{eq:TNEP:generation:min_max_limits},
                \quad \forall i \in \N, s \in \S \\
            & \eqref{eq:TNEP:ohm}-\eqref{eq:TNEP:phase_angle_difference},
                \quad \forall ij \in \E, s \in \S
        \end{align*}
    \end{model}

    \subsubsection{Variables}
    \label{sec:formulation:TNEP:variables}
        
        An important feature of the paper's approach is the observation that, due to a combination of economic, political and social factors, new transmission lines are very uncommon.
        Therefore, \emph{the paper only considers capacity upgrades for existing lines}, which are captured by investment variables $\gamma_{ij}$ that denote, for each branch $ij \in \E$ the capacity upgrade level on that branch.
        Investment variables $\gamma_{ij}$ are restricted to integer values, with up to $m$ possible capacity upgrades for each line.
        Other variables in the problem include generation dispatches $\pg$, nodal voltage angles $\va$, power flows $\pf$, and nodal imbalance variables $\xi$.

    \subsubsection{Constraints}
    \label{sec:formulation:TNEP:constraints}
        Power balance is enforced at each bus $i \, {\in} \, \N$ and scenario $s \, {\in} \, \S$ as follows:
        \begin{align}
            \label{eq:TNEP:power_balance}
            \pg_{i,s} + \sum_{ji \in \E} \pf_{ji,s} - \sum_{ij \in \E} \pf_{ij,s} = \pd_{is} + \xi_{i,s},
        \end{align}
        where slack variable $\xi$ captures nodal power imbalances.
        
        The generation minimum and maximum limits are enforced for each generator $i \in \N$ and scenario $s \in \S$ as
        \begin{align}
            \label{eq:TNEP:generation:min_max_limits}
            \pgmin_{i,s} \leq \pg_{i,s} \leq \pgmax_{i,s}.
        \end{align}
        Renewable (wind and solar) generators have a minimum output of
        0, and their maximum output varies based on the scenario $s
        \in \S$.  The minimum/maximum limits of non-renewable
        generators typically depend on the physical
        characteristics of the generator, and may vary between, e.g., summer and
        winter.
        
        The power flow on branch $ij \, {\in} \, \E$ in scenario $s \, {\in} \, \S$ is determined by Ohm's law
        \begin{align}
            \label{eq:TNEP:ohm}
            \pf_{ij,s} = \frac{1}{X_{ij}} (\theta_{j, s} - \theta_{i, s}),
        \end{align}
        and must satisfy thermal limit constraints of the form
        \begin{align}
            \label{eq:TNEP:thermal_limit}
            -\pfmax_{ij} - \gamma_{ij} \Delta^{c}_{ij} 
            \leq \pf_{ij,s}
            \leq \pfmax_{ij} + \gamma_{ij} \Delta^{c}_{ij}.
        \end{align}
        The original thermal limit $\pfmax_{ij}$ may be
        increased by upgrading the line, as captured by variable
        $\gamma_{ij}$.  In addition, phase angle differences
        $\va_{ij,s} \, {=} \, (\va_{j,s} \, {-} \, \va_{i,s})$ are
        constrained by
        \begin{align}
            \label{eq:TNEP:phase_angle_difference}
            \dvamin_{ij} \leq \va_{ij,s} \leq \dvamax_{ij}.
        \end{align}

    \subsubsection{Objective}
    \label{sec:formulation:TNEP:objective}
    
        The TNEP objective minimizes up-front investment costs in capacity upgrades, plus operational costs that capture, for each scenario, generation costs and penalty costs for any unserved/overserved energy.

\subsection{TNEP with FACTS devices}
\label{sec:formulation:TNEP:FACTS}

The paper also considers FACTS devices such as TCSCs, which can
dynamically adjust the reactance of a line \cite{Mokhtari}.  Binary
$\psi_{ij} \, {\in} \, \{0, 1\}$ takes value 1 if a FACTS device is
installed on branch $ij$, and 0 otherwise.  If a FACTS device is
installed on branch $ij \, {\in} \, \E$, i.e., if $\psi_{ij} \, {=} \,
1$, Ohm's law \eqref{eq:TNEP:ohm} in scenario $s \, {\in} \, \S$
becomes
\begin{align}
    \label{eq:TNEP:FACTS:ohm}
    \pf_{ij,s} 
        &= \frac{1}{X_{ij} + \delta X_{ij, s}} \va_{ij,s},\\
    \label{eq:TNEP:FACTS:ohm:susceptance}
        &= (B_{ij} + \dB_{ij, s}) \va_{ij, s},
\end{align}
where variables $\dX_{ij, s}$ and $\dB_{ij,s}$ capture the change in reactance and susceptance induced by the FACTS device on branch $ij$ in scenario $s$, where
\begin{align}
    \dB_{ij, s} = \frac{- \dX_{ij}}{X_{ij} (X_{ij} + \dX_{ij})}.
\end{align}
Equation \eqref{eq:TNEP:FACTS:ohm:susceptance} can be written equivalently as
\begin{align}
    \label{eq:TNEP:FACTS:flow_representation}
    \pf_{ij,s} &= B_{ij} \va_{ij,s} + \df_{ij,s},
\end{align}
where variable $\df_{ij,s} \, {=} \, \dB_{ij,s} \theta_{ij,s}$ captures the change in power flow induced by the FACTS device.

\begin{figure}[!t]
    \centering
    \begin{tikzpicture}[xscale=1.3,yscale=1.3]
        \fill[fill=blue!30, opacity=0.5] (0,0) -- (2,1) -- (2,-0.5) -- cycle;
    
        \fill[fill=red!30, opacity=0.5] (0,0) -- (-2,0.5) -- (-2,-1) -- cycle;
    
        \draw[->,thick] (-2.5,0) -- (2.5,0) node[right] {$\theta$};
        \draw[->,thick] (0,-1.5) -- (0,1.5) node[above] {$\df$};

        \node[below, xshift=5pt] at (2,0) { $\dvamin$ };
        \node[below, xshift=-5pt] at (-2,0) { $\dvamax$ };

        \draw[opacity=0] (-2, 0.5) -- (2, -0.5) node[pos=0.25, above, sloped, opacity=1] { $\dBmin$ };
        \draw[opacity=0] (-2, -1) -- (2, 1) node[pos=0.75, above, sloped, opacity=1] { $\dBmax$ };

        
    \end{tikzpicture}
    \caption{Domain Space of $(\pf, \theta)$}
    \label{fig:pf_theta_domain}
\end{figure}

Substituting out $\dB_{ij,s} \,{=}\, (\df_{ij,s}) \va_{ij,s}^{-1}$ yields the disjunction
\begin{subequations}
    \label{eq:TNEP:FACTS:disjunction}
    \begin{align}
        \dBmin_{ij} \va_{ij,s} \leq \df_{ij,s} & \leq \dBmax_{ij} \va_{ij,s} && \text{ if } \theta_{ij,s} \geq 0,\\
        \dBmax_{ij} \va_{ij,s} \leq \df_{ij,s} & \leq \dBmin_{ij} \va_{ij,s} && \text{ if } \theta_{ij,s} \leq 0,
    \end{align}
\end{subequations}
where $\dBmin_{ij}, \dBmax_{ij}$ are the minimum and maximum changes in susceptance induced by the FACTS device.
The disjunction is illustrated in Figure \ref{fig:pf_theta_domain}.
Franken et al. \cite{Franken} represent the disjunction \eqref{eq:TNEP:FACTS:disjunction} via big-M constraints, and introduce binary variable $z_{ij,s}$ that takes value 1 if $\theta_{ij,s} \geq 0$ and 0 otherwise.
This formulation reads
\begin{align}
    \label{eq:TNEP:FACTS:flowdiff_bigM_positive}
        \df_{ij,s} &\geq \dBmin_{ij} \va_{ij,s} - M (1 - z_{ij,s}),\\
    \df_{ij,s} &\leq \dBmax_{ij} \va_{ij,s} + M (1 - z_{ij,s}),\\
    \df_{ij,s} &\geq \dBmax_{ij} \va_{ij,s} - M z_{ij,s},\\
    \label{eq:TNEP:FACTS:flowdiff_bigM_negative}
        \df_{ij,s} &\leq \dBmin_{ij} \va_{ij,s} + M z_{ij,s}.
\end{align}
In addition, to ensure that power flows can only be modified when a FACT device is installed, \cite{Franken} use the additional big-M constraint
\begin{align}
    \label{eq:TNEP:FACTS:installation_bigM}
    -M \psi_{ij} \leq \df_{ij,s} \leq M \psi_{ij}. 
\end{align}
The resulting TNEP+FACTS formulation is presented in Model \ref{model:TNEP+FACT:franken}.
Note that the formulation stated in \cite{Franken} does not use the $\df$ variables, but an equivalent angle representation.

\begin{model}[!t]
    \caption{The baseline TNEP+FACTS formulation}
    \label{model:TNEP+FACT:franken}
    \small
    \begin{align*}
    \min \quad 
        & 
            \sum_{ij \in \E} c^{\text{cap}}_{ij} \gamma_{ij} + c^{\text{TCSC}}_{ij} \psi_{ij} 
            + \frac{1}{S} \sum_{i \in \N, s \in \S} c_{i}(\pg_{i,s}) + \lambda |\xi_{i,s}|\\
        & \nonumber \\
        \text{s.t.} \quad 
        & \eqref{eq:TNEP:power_balance}, \eqref{eq:TNEP:generation:min_max_limits} 
            \quad \forall i \in \N, s \in \S \\
        & \eqref{eq:TNEP:thermal_limit}, \eqref{eq:TNEP:phase_angle_difference}, \eqref{eq:TNEP:FACTS:flow_representation}
            \quad \forall ij \in \E, s \in \S \\
        & \eqref{eq:TNEP:FACTS:flowdiff_bigM_positive}-\eqref{eq:TNEP:FACTS:flowdiff_bigM_negative} 
            \quad \forall ij \in \E, s \in \S \\
        & \eqref{eq:TNEP:FACTS:installation_bigM} 
            \quad \forall ij \in \E, s \in \S
    \end{align*}
\end{model}

The original presentation in \cite{Franken} uses a single constant $M$ for big-M constraints \eqref{eq:TNEP:FACTS:flowdiff_bigM_positive}--\eqref{eq:TNEP:FACTS:installation_bigM}, whose value is not specified.
A valid value for $M$ is $\max_{ab \in \E}(\pfmax_{ab} + m\Delta^c)$.
Tight big-M bounds are important in ensuring strong performance of
MILP solvers.  Therefore, this paper proposes to improve the approach
of \cite{Franken} by using a different big-M bound for each branch
$ M_{ij} = \pfmax_{ij} + m\Delta^c$.
This modified formulation uses the same variables and constraints as
the formulation outlined in Model \ref{model:TNEP+FACT:franken}; only
the big-M bounds are modified.

\subsection{Strong MILP Formulation for TNEP+FACTS}
\label{sec:formulation:TNEP:FACTS:strong}

While classical TNEP problems can be solved using Benders
decomposition, this strategy is not applicable when considering FACTS
devices because of the non-convexity of the disjunctions in
\eqref{eq:TNEP:FACTS:disjunction}.  Instead, the TNEP+FACTS problem
is solved directly using off-the-shelf MILP solvers.  In that context,
the formulation proposed by \cite{Franken} has two main limitations.
On the one hand, it relies on big-M constraints, which are notoriously
detrimental to the performance of MILP solvers, because they typically
lead to weak linear relaxations with highly fractional solutions.  On
the other hand, there is no direct link between binary variables
$\psi$ and $z$, with the latter introducing an un-necessary level of
complexity when $\psi = 0$.

\emph{A key contribution of this paper is to address this limitation
  by proposing a strong MILP formulation for the TNEP+FACTS problem},
which significantly improves solution time.  The proposed formulation
uses an extended formulation of the disjunctive constraints
\eqref{eq:TNEP:FACTS:disjunction} using an additional binary variable,
together with facet-defining inequalities.  The strong formulation is
presented for a fixed branch $ij \in \E$ and scenario $s \in \S$.  For
each of reading, the $ij$ and $s$ subscripts are dropped; note that
$\va$ thus denotes the angle difference $\va_{ij,s} \, {=} \,
\va_{j,s} {-} \va_{i,s}$.

    First observe that, to capture the fact that $\df = 0$ if no FACTS device is installed, the 2-term disjunction in \eqref{eq:TNEP:FACTS:disjunction} can be extended to the 3-term disjunction
    \begin{subequations}
        \label{eq:TNEP:FACTS:disjunction:strong}
        \begin{align}
            \df & = 0 && \text{ if } \psi = 0,\\
            \dBmin \va \leq \df & \leq \dBmax \va && \text{ if } \psi = 1 \text{ and } \theta \geq 0,\\
            \dBmax \va \leq \df & \leq \dBmin \va && \text{ if } \psi = 1 \text{ and } \theta \leq 0.
        \end{align}
    \end{subequations}
    This 3-term disjunction can be represented as the union of the following three polyhedra
    \begin{align}
        \mathcal{P}_{0,0,0} :&
        \left\{
        \begin{array}{l}
            (\psi, z^{+}, z^{-}) = (0, 0, 0)\\
             \df = 0\\
             \dvamin \leq \va \leq \dvamax
        \end{array}
        \right. \\
        \mathcal{P}_{1, 1, 0} :&
        \left\{
        \begin{array}{l}
            (\psi, z^{+}, z^{-}) = (1, 1, 0)\\
            \dBmin \cdot\va \leq \df \leq \dBmax \cdot\va\\
            0 \leq \va \leq \dvamax
        \end{array}
        \right.\\
        \mathcal{P}_{1, 0, 1} :&
        \left\{
        \begin{array}{l}
            (\psi, z^{+}, z^{-}) = (1, 0, 1)\\
            \dBmax \cdot \va \leq \df \leq \dBmin \cdot\va\\
            \dvamin \leq \va \leq 0
        \end{array}
        \right.
    \end{align}
    where $z^{+}, z^{-}$ are additional binary variables that capture the relationship between $\psi, \theta$ and $\df$.
    Namely, if $\psi \, {=} \, 0$, then $z^{+}, z^{-}$ are both set to zero.
    Conversely, when $\psi \, {=} \, 1$, variable $z^{+}$ (resp. $z^{-}$) indicates whether the phase angle difference is positive (resp. negative), i.e., $\theta \geq 0$, (resp. $\leq 0$).

The MILP formulation for the 3-term disjunction proposed in this paper 
\eqref{eq:TNEP:FACTS:disjunction:strong} is given by
    \begin{subequations}
    \label{eq:TNEP:FACTS:facets}
    \begin{align}
        z^{+} + z^{-} &= \psi,\\
        \dvamin z^{+} + \va & \geq \dvamin, \\
        \dvamax z^{-} + \va & \leq \dvamax, \\
        - \dvamax \, \dBmin z^{+} - \dvamin \, \dBmax z^{-} + \df & \geq 0,\\
        - \dvamax \, \dBmax z^{+} - \dvamin \, \dBmin z^{-} + \df & \leq 0, \\
        \dvamax \, \dBmin z^{+} + \dvamax \, \dBmax z^{-} + \dBmax \va - \df & \leq \dvamax \, \dBmax, \\
        \dvamax \, \dBmax z^{+} + \dvamax \, \dBmin z^{-} + \dBmin \va - \df & \geq \dvamax \, \dBmin, \\
        \dvamin \, \dBmax z^{+} + \dvamin \, \dBmin z^{-} + \dBmax \va - \df & \geq \dvamin \, \dBmax, \\
        \dvamin \, \dBmin z^{+} + \dvamin \, \dBmax z^{-} + \dBmin \va - \df & \leq \dvamin \, \dBmin, \\
        \psi, z^{+}, z^{-} & \in \{0, 1\},
    \end{align}
    \end{subequations}
    and it is complemented by the big-M constraint
    \begin{align}
        \label{eq:TNEP:FACTS:strong:bigM}
        - (\pfmax + m \Delta^{c}) \psi \leq \df \leq (\pfmax + m \Delta^{c}) \psi,
    \end{align}
which, although redundant with constraints
\eqref{eq:TNEP:FACTS:facets},
\eqref{eq:TNEP:FACTS:flow_representation} and
\eqref{eq:TNEP:thermal_limit}, was found to improve performance.  The
proposed strong TNEP+FACTS formulation is summarized in Model
\ref{model:TNEP+FACT:convex_hull}.
    \begin{model}[!t]
        \caption{The proposed TNEP+FACTS formulation}
        \label{model:TNEP+FACT:convex_hull}
        \small
        \begin{align*}
        \min \quad 
            & 
                \sum_{ij \in \E} c^{\text{cap}}_{ij} \gamma_{ij} + c^{\text{TCSC}}_{ij} \psi_{ij} 
                + \frac{1}{S} \sum_{i \in \N, s \in \S} c_{i}(\pg_{i,s}) + \lambda |\xi_{i,s}|\\
            & \nonumber \\
            \text{s.t.} \quad 
            & \eqref{eq:TNEP:power_balance}, \eqref{eq:TNEP:generation:min_max_limits} 
                \quad \forall i \in \N, s \in \S \\
            & \eqref{eq:TNEP:thermal_limit}, \eqref{eq:TNEP:phase_angle_difference}, \eqref{eq:TNEP:FACTS:flow_representation}
                \quad \forall ij \in \E, s \in \S \\
            & \eqref{eq:TNEP:FACTS:facets}
                \quad \forall ij \in \E, s \in \S\\
            & \eqref{eq:TNEP:FACTS:strong:bigM}
                \quad \forall ij \in \E, s \in \S
        \end{align*}
    \end{model}

    It is easy to verify, e.g., by explicit inspection of all three feasible realizations of $\psi, z^{+}, z^{-}$, that the feasible set of constraints \eqref{eq:TNEP:FACTS:facets} is exactly $\mathcal{P}_{0,0,0} \cup \mathcal{P}_{1,1,0} \cup \mathcal{P}_{1,0,1}$.
    This, in turn, proves the validity of the proposed formulation (Model \ref{model:TNEP+FACT:convex_hull}).
    The strength of the formulation stems from the result of Theorem \ref{thm:facets} below, which shows that every constraint in \eqref{eq:TNEP:FACTS:facets} is facet-defining, i.e., the constraints are as tight as possible.
    \begin{theorem}
        \label{thm:facets}
        The inequality constraints in \eqref{eq:TNEP:FACTS:facets} are facet-defining for $\mathcal{P} = \conv(\mathcal{P}_{0,0,0} \cup \mathcal{P}_{1,1,0} \cup \mathcal{P}_{1,0,1})$.
    \end{theorem}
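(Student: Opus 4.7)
The plan is a direct facet verification that takes advantage of the very simple combinatorial structure of $\mathcal{P}$. Once the binary coordinates $(\psi, z^{+}, z^{-})$ are fixed, each disjunct is a low-dimensional polytope in the $(\va, \df)$-plane: $\mathcal{P}_{0,0,0}$ is a line segment, while $\mathcal{P}_{1,1,0}$ and $\mathcal{P}_{1,0,1}$ are triangles. Writing out their extreme points explicitly in terms of $\dvamin, \dvamax, \dBmin, \dBmax$ yields a pool of $2 + 3 + 3 = 8$ candidate vertices of $\mathcal{P}$. The first step is to show that $\dim(\mathcal{P}) = 4$: the equality $z^{+} + z^{-} = \psi$ holds at every candidate vertex, so $\dim(\mathcal{P}) \leq 4$; for the matching lower bound, a short rank computation on five of the candidate vertices (say one from $\mathcal{P}_{0,0,0}$ and two from each triangle), carried out modulo this equality, gives the result.

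For each of the eight inequalities in \eqref{eq:TNEP:FACTS:facets}, the argument then reduces to: (i) identifying the subset of the eight vertices at which the inequality is tight, (ii) exhibiting at least one vertex at which it is strict so that it defines a proper face, and (iii) verifying that the tight subset contains four affinely independent points, so that the face has dimension exactly $3 = \dim(\mathcal{P}) - 1$. The tight set is transparent in each case from the geometry of the figure: constraints (b)--(c) encode the angle bounds on $\va$ in each disjunct; (d)--(e) encode the extremal values of $\df$ in each disjunct; and (f)--(i) encode the diagonal slopes of the triangles $\mathcal{P}_{1,1,0}$ and $\mathcal{P}_{1,0,1}$ together with the matching edges of $\mathcal{P}_{0,0,0}$. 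For example, constraint (b), $\dvamin z^{+} + \va \geq \dvamin$, is tight at precisely $v_{1} = (0,0,0,\dvamin,0)$, $v_{3} = (1,1,0,0,0)$, and the two vertices of $\mathcal{P}_{1,0,1}$ with $\va = \dvamin$; a $3\times 3$ rank check on the differences shows that these four points are affinely independent modulo the equality.

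The main obstacle is the combinatorial bookkeeping, as all eight inequalities need such a check. To keep the proof compact, I would exploit the natural symmetries of the figure: the reflection $\va \mapsto -\va$ swaps $z^{+} \leftrightarrow z^{-}$, $\dvamin \leftrightarrow -\dvamax$, and $\mathcal{P}_{1,1,0} \leftrightarrow \mathcal{P}_{1,0,1}$, while the reflection $\df \mapsto -\df$ swaps $\dBmin \leftrightarrow -\dBmax$; under these involutions the eight inequalities fall into a small number of orbits, so only two or three representative cases need explicit verification and the rest follow by symmetry. A minor point is to record the standing nondegeneracy assumptions $\dvamin < 0 < \dvamax$ and $\dBmin < 0 < \dBmax$, under which the three-term disjunction is non-trivial; if any of these strict inequalities fails, one or more disjuncts collapses and certain inequalities in \eqref{eq:TNEP:FACTS:facets} become dominated rather than facet-defining.
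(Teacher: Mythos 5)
Your proposal is correct and follows essentially the same route as the paper's proof: use the equality $z^{+}+z^{-}=\psi$ to see that $\mathcal{P}$ is 4-dimensional, enumerate the extreme points of the three disjuncts (the paper's Lemma~\ref{lemma:extreme_points}), and verify tightness of each inequality at enough extreme points. If anything, your plan is slightly more careful than the paper's terse argument, since you explicitly require four \emph{affinely independent} tight points (mere tightness at four extreme points is not by itself sufficient), note the need to exhibit a point where the inequality is strict, and record the nondegeneracy assumptions $\dvamin<0<\dvamax$ and $\dBmin<0<\dBmax$ under which the facet claims hold.
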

    \begin{proof}
        First, observe that $\mathcal{P}$ is a 4-dimensional set
        because of the equality constraint $\psi = z^{+} + z^{-}$.
        Therefore, every facet of $\mathcal{P}$ is a valid inequality
        for $\mathcal{P}$ which is tight at 4 extreme points.  The
        rest of the proof follows by verifying that each inequality
        constraint in \eqref{eq:TNEP:FACTS:facets} is tight at least 4
        extreme points of $\mathcal{P}$, using the extreme point
        characterization of Lemma \ref{lemma:extreme_points}.  Note
        that $\psi \leq 1, z^{+} \geq 0$ and $z^{-} \geq 0$, which are
        a consequence of the binary requirement on $\psi, z^{+},
        z^{-}$, are also facet-defining.
    \end{proof}
    \begin{lemma}
        \label{lemma:extreme_points}
        The extreme points of $\mathcal{P}_{0, 0, 0}$, $\mathcal{P}_{1, 1, 0}$ and $\mathcal{P}_{1, 0, 1}$ are
        \begin{subequations}
        \begin{align}
            \mathcal{P}_{0,0,0} &= \conv \left\{
                \begin{array}{l}
                    (0, 0, 0, \dvamin, 0),\\
                    (0, 0, 0, \dvamax, 0)
                \end{array}
            \right\}\\
            \mathcal{P}_{1,1,0} &= \conv \left\{
                \begin{array}{l}
                    (1, 1, 0, 0, 0),\\
                    (1, 1, 0, \dvamax, \dvamax \cdot \dBmin),\\
                    (1, 1, 0, \dvamax, \dvamax \cdot \dBmax)
                \end{array}
            \right\}\\
            \mathcal{P}_{1,0,1} &= \conv \left\{
                \begin{array}{l}
                    (1, 0, 1, 0, 0),\\
                    (1, 0, 1, \dvamin, \dvamin \cdot \dBmin),\\
                    (1, 0, 1, \dvamin, \dvamin \cdot \dBmax)
                \end{array}
            \right\}
        \end{align}
        \end{subequations}
    \end{lemma}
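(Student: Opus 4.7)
The plan is to analyze each of the three polyhedra independently. Each $\mathcal{P}_{a,b,c}$ is defined by three equality constraints fixing $(\psi, z^{+}, z^{-}) = (a,b,c)$ together with linear inequalities involving only $\va$ and $\df$. Characterizing the extreme points therefore reduces to identifying the vertices of the projection of each set onto the $(\va, \df)$-plane.

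For $\mathcal{P}_{0,0,0}$, the constraint $\df = 0$ further restricts the projection to a one-dimensional horizontal segment with $\va \in [\dvamin, \dvamax]$; its extreme points are evidently the two endpoints $(\dvamin, 0)$ and $(\dvamax, 0)$. For $\mathcal{P}_{1,1,0}$, the projection is $\{(\va, \df) : 0 \leq \va \leq \dvamax,\, \dBmin \va \leq \df \leq \dBmax \va\}$. The two slanted bounds meet at the origin and separate as $\va$ increases (since $\dBmin < \dBmax$), with the right edge $\va = \dvamax$ closing off the region. This yields a triangle with vertices $(0, 0)$, $(\dvamax, \dvamax \cdot \dBmin)$, and $(\dvamax, \dvamax \cdot \dBmax)$. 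The case $\mathcal{P}_{1,0,1}$ is entirely symmetric after reflecting $\va \mapsto -\va$: the projection is the triangle with apex $(0,0)$ and the two corners $(\dvamin, \dvamin \cdot \dBmin)$ and $(\dvamin, \dvamin \cdot \dBmax)$ on the left edge $\va = \dvamin$.

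I anticipate no real obstacle; the only mildly delicate point is the apex $(0,0)$ of each triangle, where the two slanted inequalities collapse to the single condition $\df = 0$. One should verify that enough linearly independent active constraints remain to certify a vertex: together with the additional active bound $\va = 0$, either slanted inequality provides two linearly independent tight constraints in the $(\va, \df)$-plane, which combined with the three equalities fixing $(\psi, z^{+}, z^{-})$ pin down a unique extreme point of the five-variable set. Aside from this mild degeneracy, the remaining verification is a routine enumeration of the corners of one segment and two triangles.
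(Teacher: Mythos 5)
Your proposal is correct and amounts to carrying out in detail exactly what the paper's proof does, which is simply to state that the result is ``immediate by inspection'': each set is a segment or a triangle in the $(\va,\df)$-plane (after fixing the binaries), and you enumerate its vertices, including a sound treatment of the degenerate apex at the origin. No discrepancy with the paper's argument.
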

    \begin{proof}
        Immediate by inspection.
    \end{proof}

\noindent
Finally, the paper implements a simple bound-tightening
procedure to further strengthen the formulation.  This strategy is
applied as a pre-processing step, and it tightens lower and upper
limits on each phase-angle difference.  Namely, the approach
uses the relation
    \begin{align}
        \pfmax + m \Delta^{c} \geq \pf &= (B + \dB) \va \geq (B + \dBmin) \va
    \end{align}
    which yields $\va \leq \min(\dvamax, (\pfmax + m \Delta^{c}) (B + \dBmin)^{-1})$ and allows the opportunity to tighten the upper limit on $\theta$.
    The lower limit is tightened similarly.

\section{Numerical results}
\label{sec:results}

\subsection{Data generation}
\label{sec:results:data}

The paper reports numerical experiments on the 123-bus Texas 345kv
backbone transmission system \cite{TX123BT}.  The system comprises 123
buses, 255 branches, 138 non-renewable generators, and 154 renewable
(wind and solar) generators.  The data provided in \cite{TX123BT} also
includes hourly time series of load and renewable generation.  To
reflect future systems with a high renewable penetration, the paper
scales the load, wind and solar time series by 1.5, 2 and 3,
respectively. This should give interesting case studies for how
the Texas grid will evolve. 
    
Figure \ref{fig:texas_renewables} depicts the grid topology, which
spans the state of Texas, and shows the location and size (scaled
nameplate capacity, in MW) of each renewable generator.  An important
feature of the Texas system is that renewable generation is primarily
located in the west of the state, where the wind resource
is abundant, whereas load areas are concentrated around major
metropolitan and industrial areas, especially Houston (south-east, by
the coast) and Dallas (north-east).

\begin{figure}[!t]
\centering
\includegraphics[width=0.80\columnwidth]{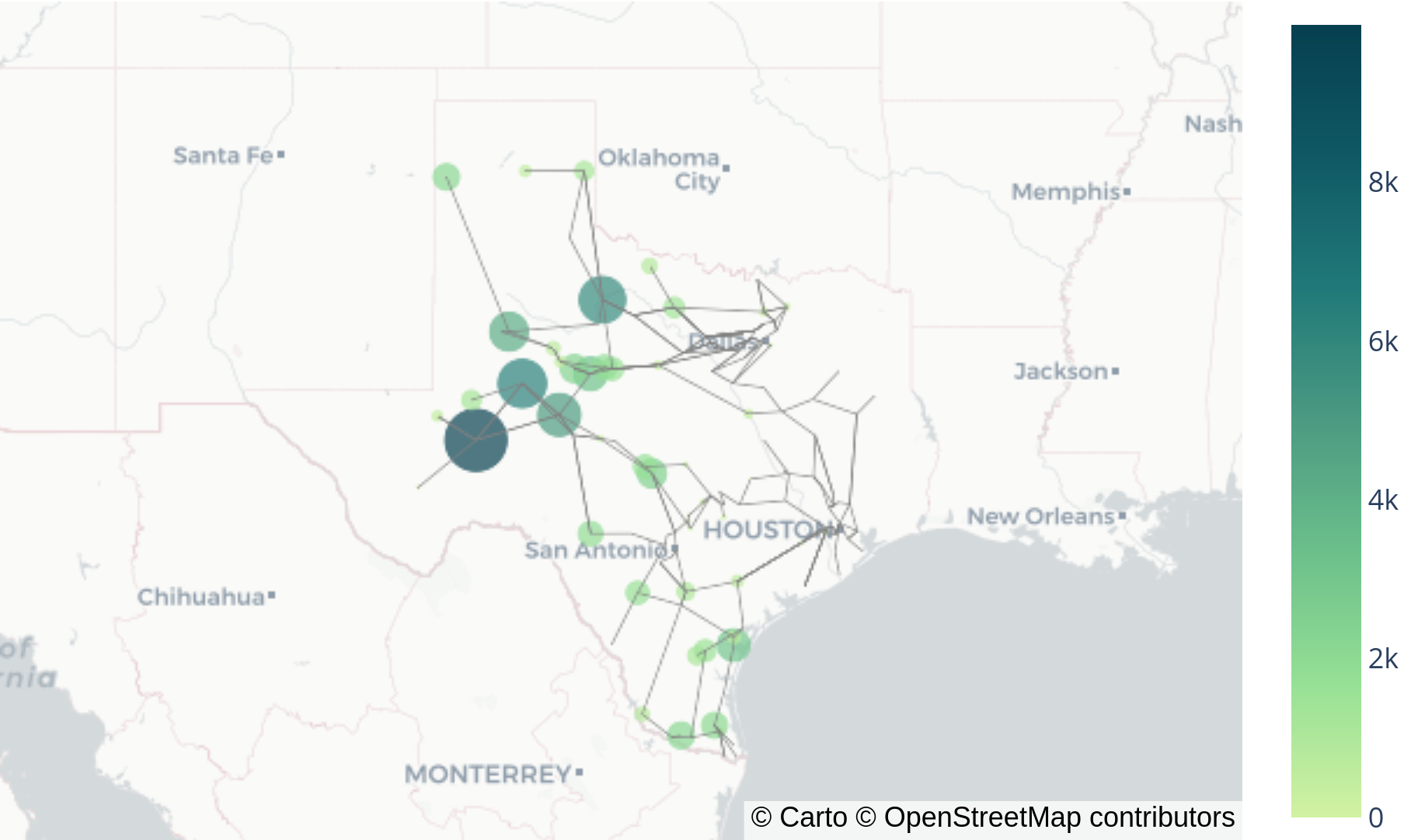}
\caption{The 123-bus Texas 345kV backbone topology. Circles indicate the location and size (scaled nameplate capacity, in MW) of renewable generators.}
\label{fig:texas_renewables}
\end{figure}

The experiments consist of 10 scenarios for TNEP, each corresponding
to one hour of the year.  For both summer and winter, the experiments
select the five hours corresponding to the highest load, highest net load,
highest wind generation, highest solar generation, and lowest wind
generation over that season.  These hours are the most likely to lead
to congestion, load shedding, and/or curtailment of renewable
generation.

The production costs of non-renewable generators are taken from
\cite{TX123BT}, and the cost of renewable generators is set to zero.
Table \ref{tab:cost_config} reports cost information regarding
capacity upgrades, TCSC devices, and unserved/overserved energy
penalties, which are adapted from \cite{Esmaili}.  There is little
information available in the literature regarding the cost of TCSC
devices.  Preliminary experiments using the same cost data as reported
in \cite{Esmaili}, resulted in no capacity upgrades nor FACTS device
being installed, i.e., it was more economical to pay penalties for
unserved energy.  This suggests that the penalties in \cite{Esmaili}
might have been too low relative to the upgrade costs.  Therefore, the
authors elected to adjust penalties and investment costs to obtain
more interesting test cases and more insights into potential upgrade
scenarios.

    \begin{table}[!t]
        \centering
        \caption{Cost Configuration}
        \label{tab:cost_config}
        \begin{tabular}{cr}
        \toprule
         & Costs \\
        \midrule
        Unserved \& Overserved Energy          & \$50,000/MWh       \\
            Capacity Upgrades &  \$124/MW-km       \\ 
            TCSCs           & \$2,200/MVA          \\ 
        \bottomrule
        \end{tabular}
    \end{table}

\subsection{Computational Performance}
\label{sec:results:performance}

All TNEP formulations are implemented in Julia using the JuMP
\cite{Lubin2023} modeling language. All experiments are executed on 24-core Intel Xeon
machines running Linux, on the Phoenix cluster \cite{PACE}.
They are executed with 24 threads, 288GB of RAM, a 6-hour time
limit, a 0.01\% optimality gap tolerance, and other settings set to
default. 

Four TNEP formulations are compared:
\begin{enumerate}
\item TNEP: the baseline TNEP
formulation of Model \ref{model:TNEP}, which does not consider FACTS
devices.
\item FBSM: the TNEP+FACTS formulation of Model
  \ref{model:TNEP+FACT:franken} \cite{Franken}.
\item FBSMi: the
improved FBSM formulation, with individual big-M values for
individual branches.
\item  FACeTS: the paper's formulation,
  presented in Model \ref{model:TNEP+FACT:convex_hull}.
\end{enumerate}

Experiments were conducted with two optimization solvers: Gurobi 10 \cite{gurobi}, a commercial solver, and HiGHS \cite{highs}, a leading open-source solver.
Despite the comprehensive capabilities of HiGHS, it was unable to find good-quality solutions within the allocated 6-hour time limit, for any of the four formulations.
Consequently, this study focuses on the results obtained through Gurobi, as it consistently provided results within the specified computational limits.

Table \ref{tab:computational_performance} reports, for each
formulation: the number of variables (Vars), constraints (Constrs),
and the corresponding compute time (Time, in seconds), and number of
branch-and-bound nodes.  The former two indicate the size of the
problem, and the latter two metrics indicate the computational
performance. 

    \begin{table}[t]
        \centering
        \caption{Comparison of MILP formulations}
        \label{tab:computational_performance}
        \begin{tabular}{crrrr}
            \toprule
             Formulation & Vars. & Constrs & Time  & Nodes \\
             \midrule
             TNEP & 9,415 & 13,980 & 4s  & 156 \\
             \midrule
             FBSM \cite{Franken} & 14,770 & 29,280 & 1,733s  & 268,940 \\
             FBSMi & 14,770 & 29,280 & 1,375s  & 253,954 \\
             FACeTS & 17,320 & 36,930 & 386s & 6,273 \\
             \bottomrule
        \end{tabular}
    \end{table}

As expected, including FACTS devices significantly increases computing
times, by a factor 100x--400x.  The FBSM formulation performs the
worst, with a computing time of 1,700s and over 260,000
branch-and-bound nodes needed to reach the prescribed optimality gap.
The use of tighter big-M bounds yields a reduction in computing time
of 20\%, and marginal reduction in branch-and-bound nodes. {\em
  Despite comprising more variables and constraints, the FACeTS
  formulation yields the best performance, achieving a 4x speedup over
  FBSM, and an impressive 40x reduction in the number of
  branch-and-bound nodes.} This represents a key technical
contribution of this paper.

Figures \ref{fig:gap_over_time}, \ref{fig:primal_over_time} and \ref{fig:dual_over_time} further depict the evolution of each formulation's optimality gap, primal bound and dual bound over time, respectively.
The optimality gap after is defined as
$$gap = \frac{Z^{p} - Z^{d}}{|Z^{p}|},$$ where $Z^{p}, Z^{d}$ denote the current primal and dual bounds.
Figures \ref{fig:primal_over_time} and \ref{fig:dual_over_time} show that the computational speedup of FACeTS primarily arises from the enhancements of the dual bound. This improvement is a direct consequence of incorporating the facet-defining inequalities into the formulation, which serve to tighten the problem's relaxation through more precise cuts.

\begin{figure}[!t]
    \centering
    \includegraphics[width=0.80\columnwidth]{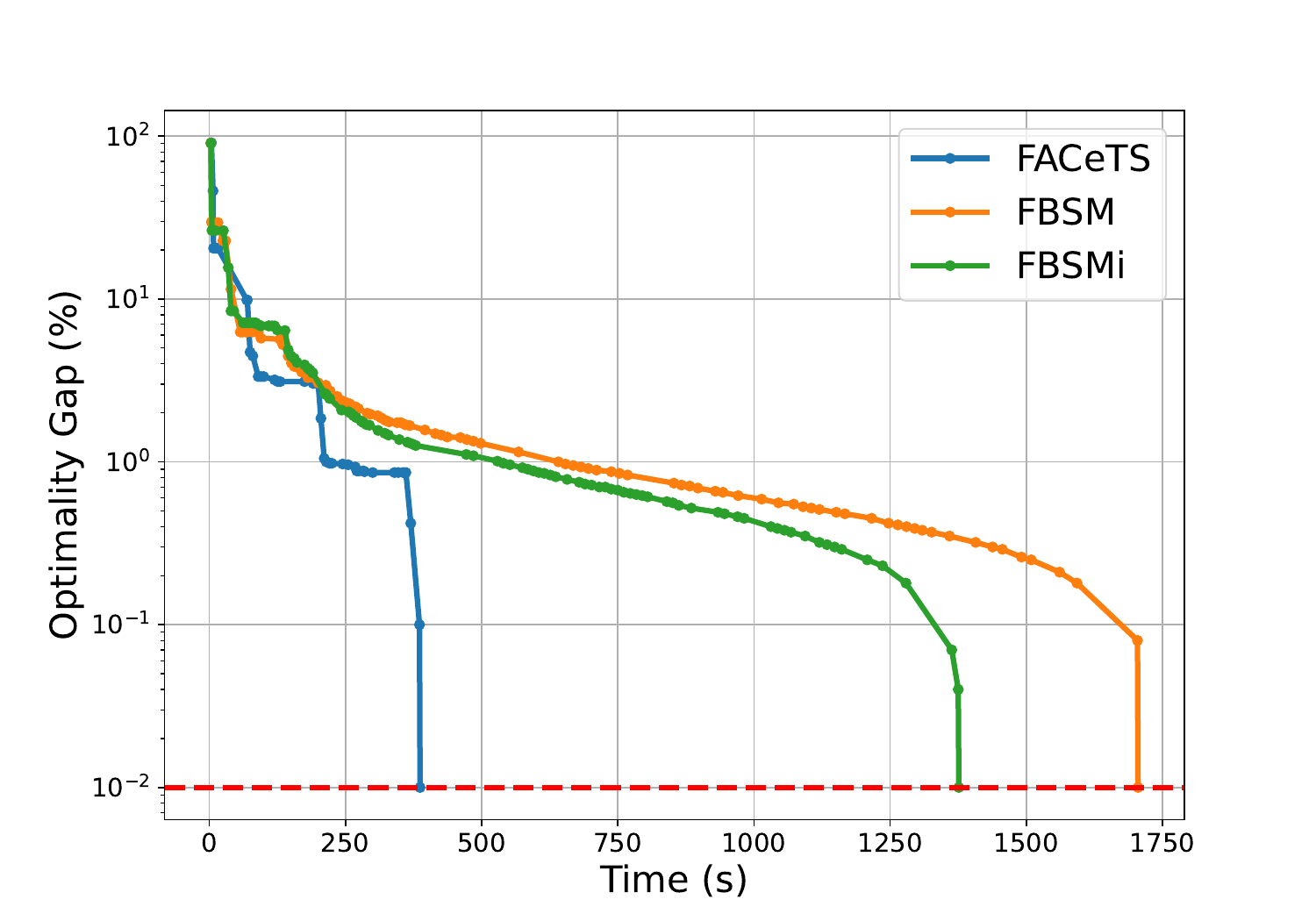}
    \caption{Evolution of optimality gap (in \%) over time, shown in log-scale. The \textcolor{red}{\textbf{red}} dashed line indicates the optimality tolerance of 0.01\%.}
    \label{fig:gap_over_time}
\end{figure}

\begin{figure}[!t]
    \centering
    \includegraphics[width=0.80\columnwidth]{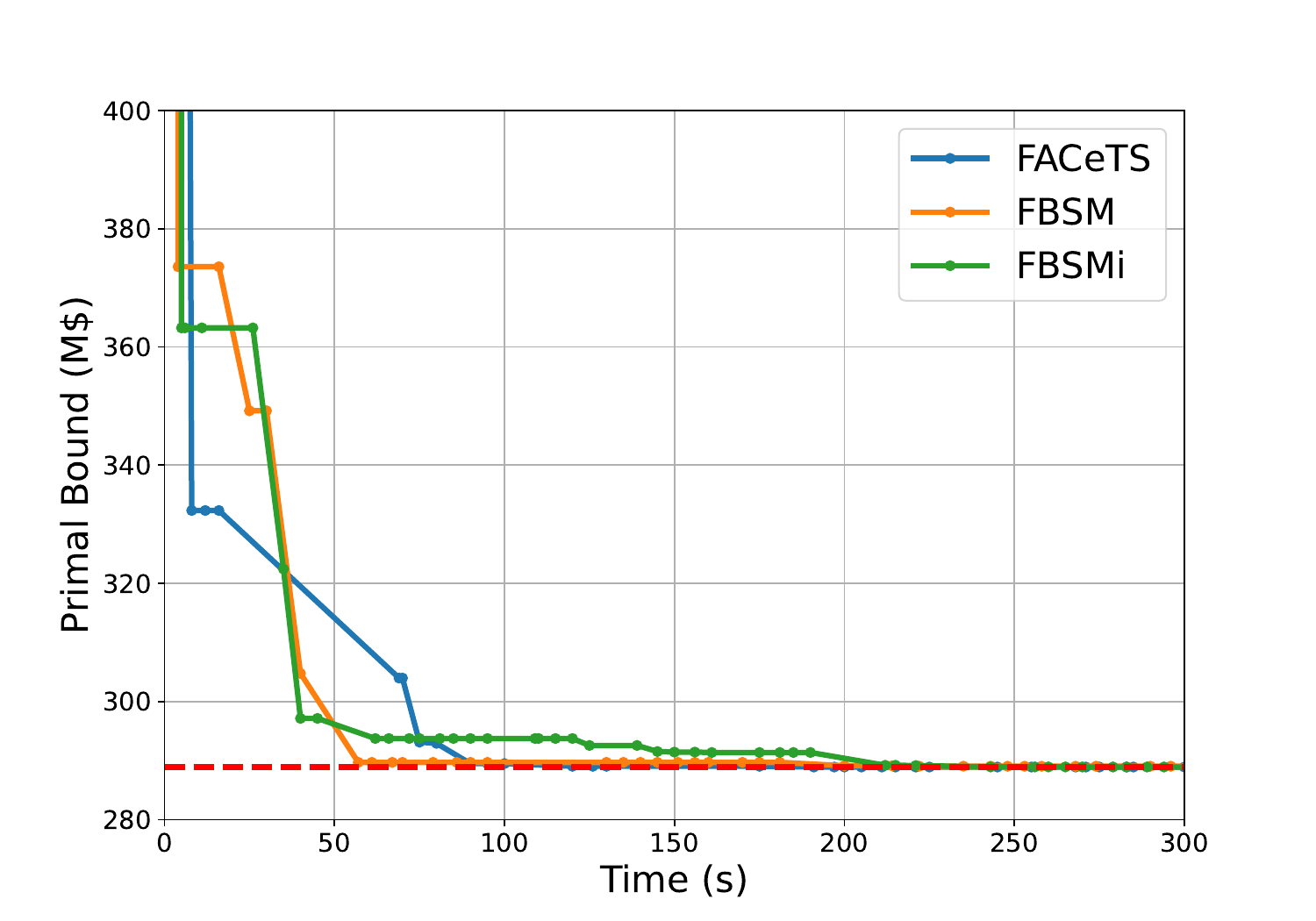}
    \caption{Evolution of each formulation's primal bound (in M\$) over time. The \textcolor{red}{\textbf{red}} dashed line indicates the optimal objective value.}
    \label{fig:primal_over_time}
\end{figure}

\begin{figure}[!t]
    \centering
    \includegraphics[width=0.8\columnwidth]{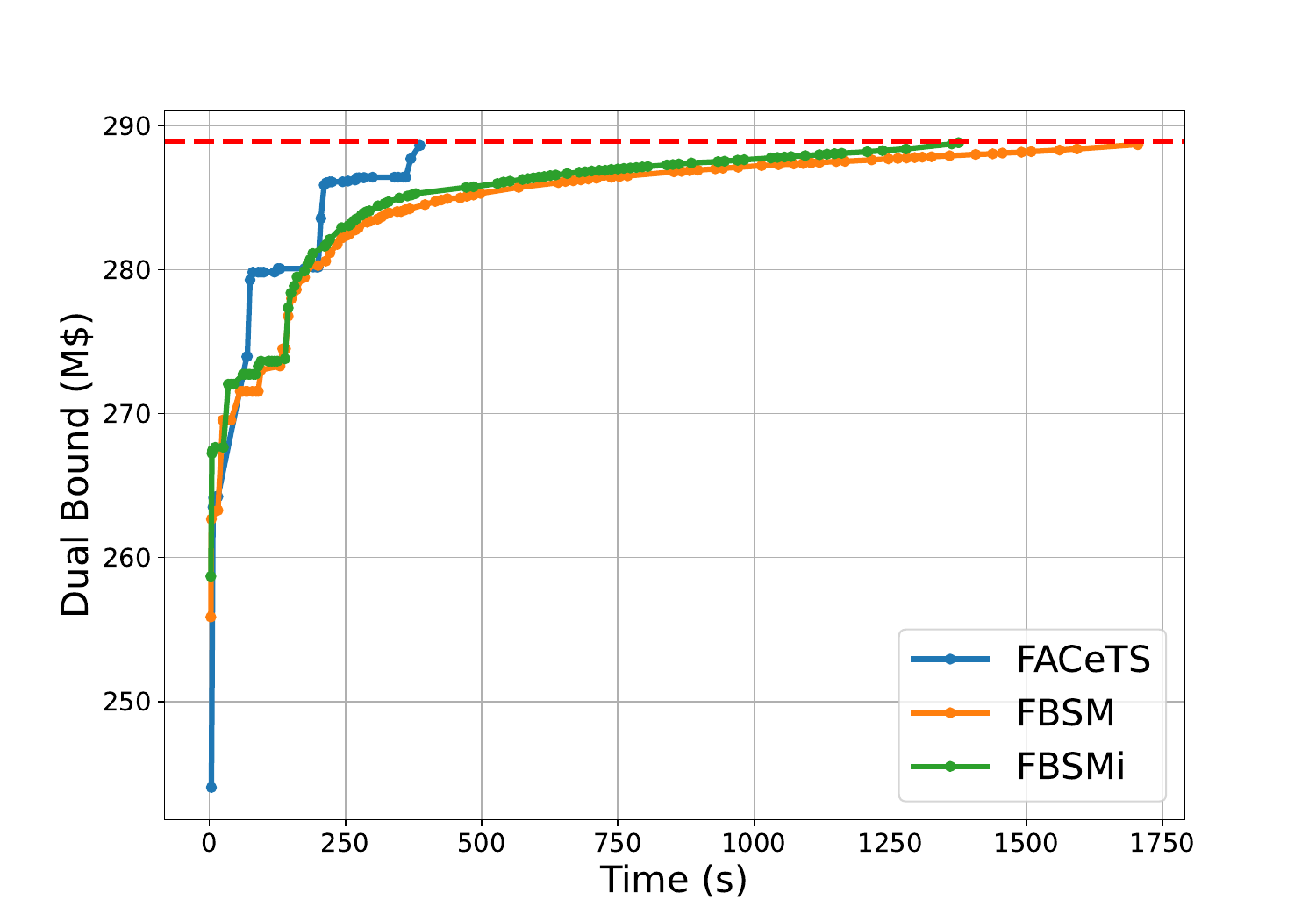}
    \caption{Evolution of each formulation's dual bound (in M\$) over time. The \textcolor{red}{\textbf{red}} dashed line indicates the optimal objective value.}
    \label{fig:dual_over_time}
\end{figure}
    
\subsection{Transmission Expansion Results}
\label{sec:results:expansion_results}

Table \ref{tab:physical_performance} provides a detailed comparison of
the physical performance between TNEP and TNEP+FACTS.  The reported
metrics include the final objective (Total Costs), line capacity
upgrade costs, TCSC installation costs, nonrenewable generation costs,
and the quantity of unserved energy and curtailed renewable energy.
While the decline in total costs associated with the TNEP+FACTS
approach is relatively modest, amounting to approximately 2.5\%, there
are notable advantages in other performance metrics.  Specifically,
the TNEP+FACTS methodology leads to a reduction of 393 MWh in unserved
energy, i.e., a decrease of about 7.4\%.  Moreover, {\em
  curtailed renewable energy decreases by 1057 MWh, i.e., a
  reduction of nearly 12.5\% compared to TNEP.}

    \begin{table}[!t]
        \centering
        \caption{Comparison of Physical Performance}
        \label{tab:physical_performance}
        \begin{tabular}{crr}
        \toprule
                        Metric       & TNEP & TNEP+FACTS \\
        \midrule
        Total Costs (inc. penalties)    & \$296.20M    & \$288.91M            \\ 
        Capacity Upgrade Costs & \$28.23M    & \$27.04M            \\ 
        TCSC Costs             & -           & \$13.58M            \\ 
        Nonrenewable Costs     & \$1.86M     & \$1.84M            \\ 
        \midrule
        Unserved Energy        & 5322 MWh    & 4929 MWh            \\ 
        Curtailed Energy       & 8465 MWh    & 7408 MWh            \\ 
        \bottomrule
        \end{tabular}
    \end{table}

Tables \ref{tab:tnep_branch_investments} and \ref{tab:tnep+facts_branch_investments} provide, for TNEP and TNEP+FACTS, respectively, a detailed breakdown of which branches are slated for capacity upgrades and, for TNEP+FACTS, where TCSCs devices are installed.
Three levels of capacity upgrades (300MW, 600MW and 900MW) are available.
While, at first glance, it might seem
counterintuitive that TNEP+FACTS results in more line capacity
upgrades, Table \ref{tab:physical_performance} reveals that the
overall cost of line upgrades is lower compared to TNEP. Given that
capacity upgrade costs are related to MW-km, this suggests that
TNEP+FACTS capacity upgrades are predominantly occurring on shorter
lines. The geographical distribution of these investments is shown in
Figures \ref{fig:trad_investments} and \ref{fig:facts_investments},
displaying TNEP and TNEP+FACTS investments, respectively. Red lines
indicate the location of capacity upgrades, with the thickness/width
of each line representing one of the three available upgrade
levels. Figure \ref{fig:facts_investments} also shows TCSC
installation locations, denoted by blue 'X's.

\begin{table}[t]
    \centering
    \caption{TNEP Branch Investments}
    \label{tab:tnep_branch_investments}
    \begin{tabular}{lcl}
        \toprule
            Type of Upgrade & Branch\# & From Bus - To Bus\\
        \midrule
            +300MW
                & 151 & San Antonio 14 - San Antonio 2\\ 
                & 166 & San Antonio 2 - San Antonio 33\\
                & 183 & Houston 12 - Houston 45\\
                & 188 & Mexia - Frost\\
                & 194 & Houston 48 - Pasadena 4\\
                & 250 & Abilene 5 - Merkel 2\\
                & 251 & Snyder 2 - Roscoe 2\\
        \midrule
            +600MW
                & 133 & Blum - Frost\\ 
                & 247 & Eastland - Gordon\\
        \midrule
            +900MW
                & 162 & Mexia - Waco 2\\
                & 252 & Throckmorton - Knox City\\
        \midrule
            TCSCs           
                & -- \\
        \bottomrule
    \end{tabular}
\end{table}

\begin{table}[t]
    \centering
    \caption{TNEP+FACTS Branch Investments}
    \label{tab:tnep+facts_branch_investments}
    \begin{tabular}{lcl}
        \toprule
            Type of Upgrade & Branch\# & From Bus - To Bus\\
        \midrule
            +300MW
                & 101 & Abilene 2 - Abilene 5\\
                & 151 & San Antonio 14 - San Antonio 2\\ 
                & 166 & San Antonio 2 - San Antonio 33\\
                & 183 & Houston 12 - Houston 45\\
                & 188 & Mexia - Frost\\
                & 194 & Houston 48 - Pasadena 4\\
                & 251 & Snyder 2 - Roscoe 2\\
                & 252 & Throckmorton - Knox City\\
        \midrule
            +600MW
                & 133 & Blum - Frost\\ 
                & 250 & Eastland - Gordon\\
        \midrule
            +900MW
                & 162 & Mexia - Waco 2\\
                & 247 & Eastland - Gordon\\
        \midrule
            TCSCs           
                & \phantom{0}72 & Mccamey - Carlsbad\\
                & 197 & Throckmorton - Knox City\\
                & 198 & Throckmorton - Knox City\\
                & 235 & North Richland Hills 1 - Weatherford 4\\
                & 248 & Mccamey - Carlsbad\\
        \bottomrule
    \end{tabular}
\end{table}

\begin{figure}[!t]
    \centering
    \includegraphics[width=0.80\columnwidth]{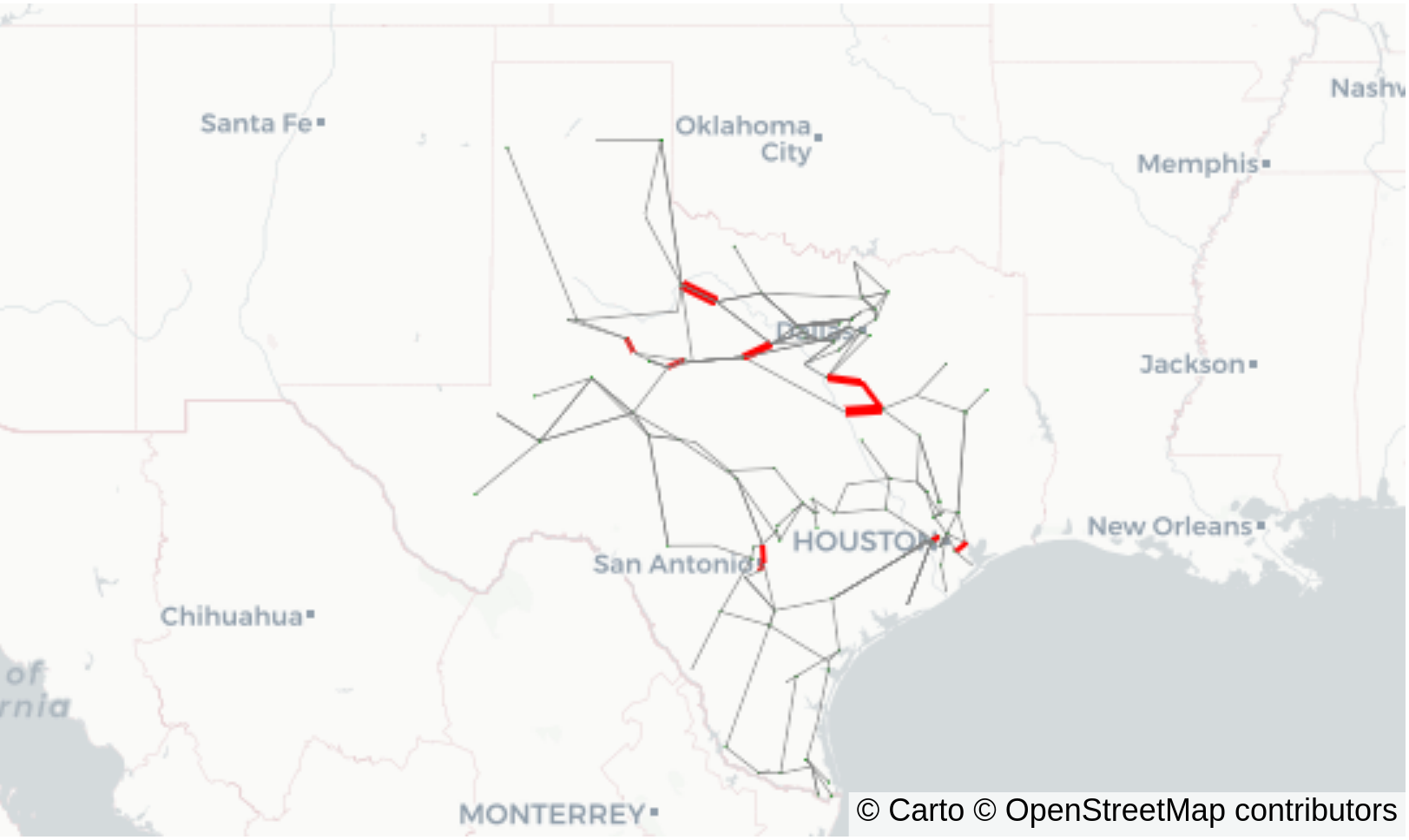}
    \caption{Baseline TNEP Investments. \textcolor{red}{\textbf{Red}} lines indicate the location of capacity upgrades, with the thickness/width of each line representing one of the three available upgrade levels.}
    \label{fig:trad_investments}
\end{figure}

\begin{figure}[!t]
    \centering
    \includegraphics[width=0.80\columnwidth]{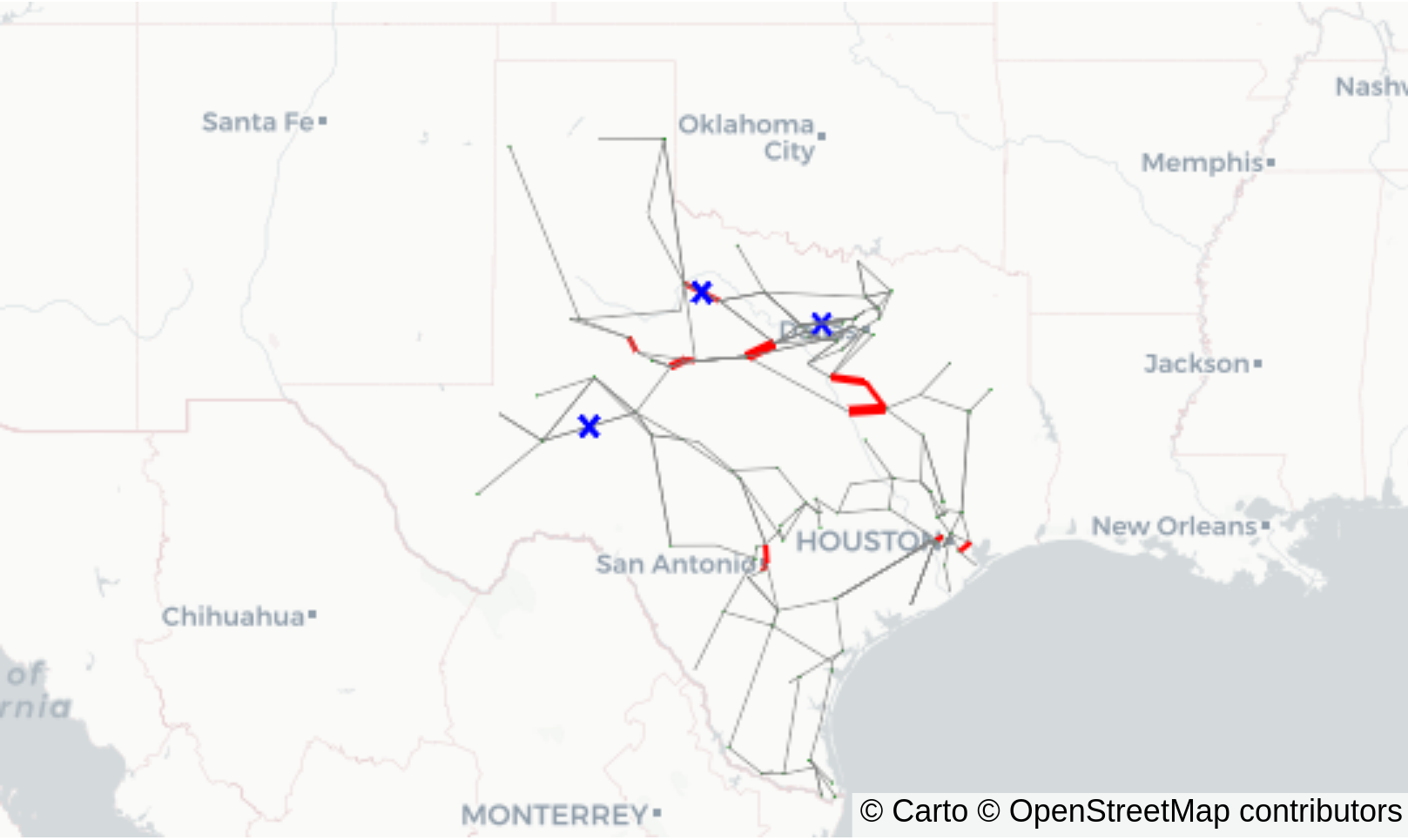}
    \caption{TNEP+FACTS Investments. \textcolor{red}{\textbf{Red}} lines indicate the location of capacity upgrades, with the thickness/width of each line representing one of the three available upgrade levels. \textcolor{blue}{\textbf{Blue}} 'X's indicate the location of installed TCSCs.}
    \label{fig:facts_investments}
\end{figure}

    Figures \ref{fig:trad_ue} and \ref{fig:facts_ue} depict the location and amount of mean unserved energy for TNEP and TNEP+FACTS, respectively.  A predominant portion of the unserved energy is observed near urban load centers. Notably, the integration of FACTS results in a pronounced reduction of unserved energy in the northern corridors. This aligns with the locations where the TCSCs were installed as seen in Figure \ref{fig:facts_investments}.

\begin{figure}[t]
    \centering
    \includegraphics[width=0.80\columnwidth]{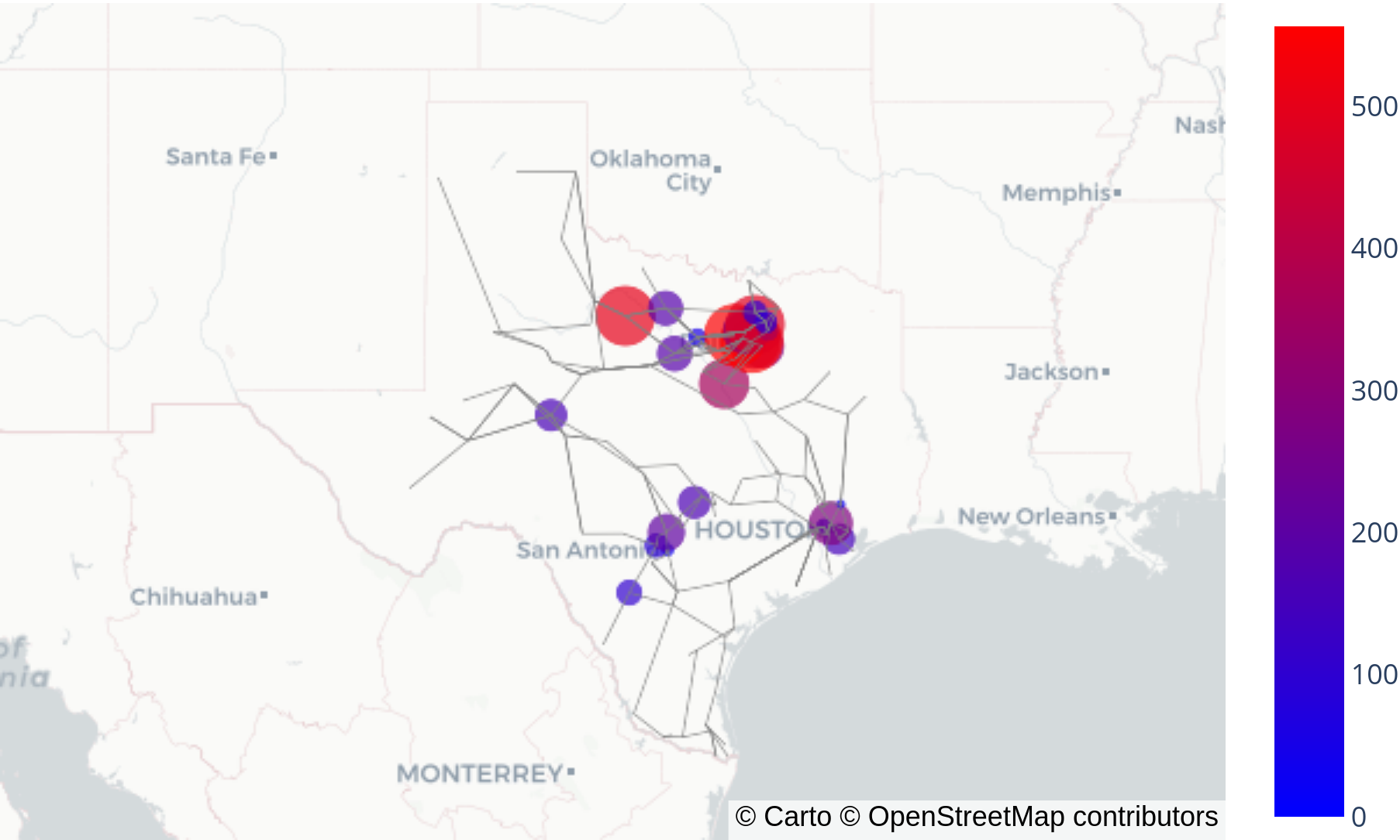}
    \caption{Unserved energy in baseline TNEP. Circles indicate the location and amount (in MWh) of average unserved energy.}
    \label{fig:trad_ue}
\end{figure}

\begin{figure}[t]
    \centering
    \includegraphics[width=0.80\columnwidth]{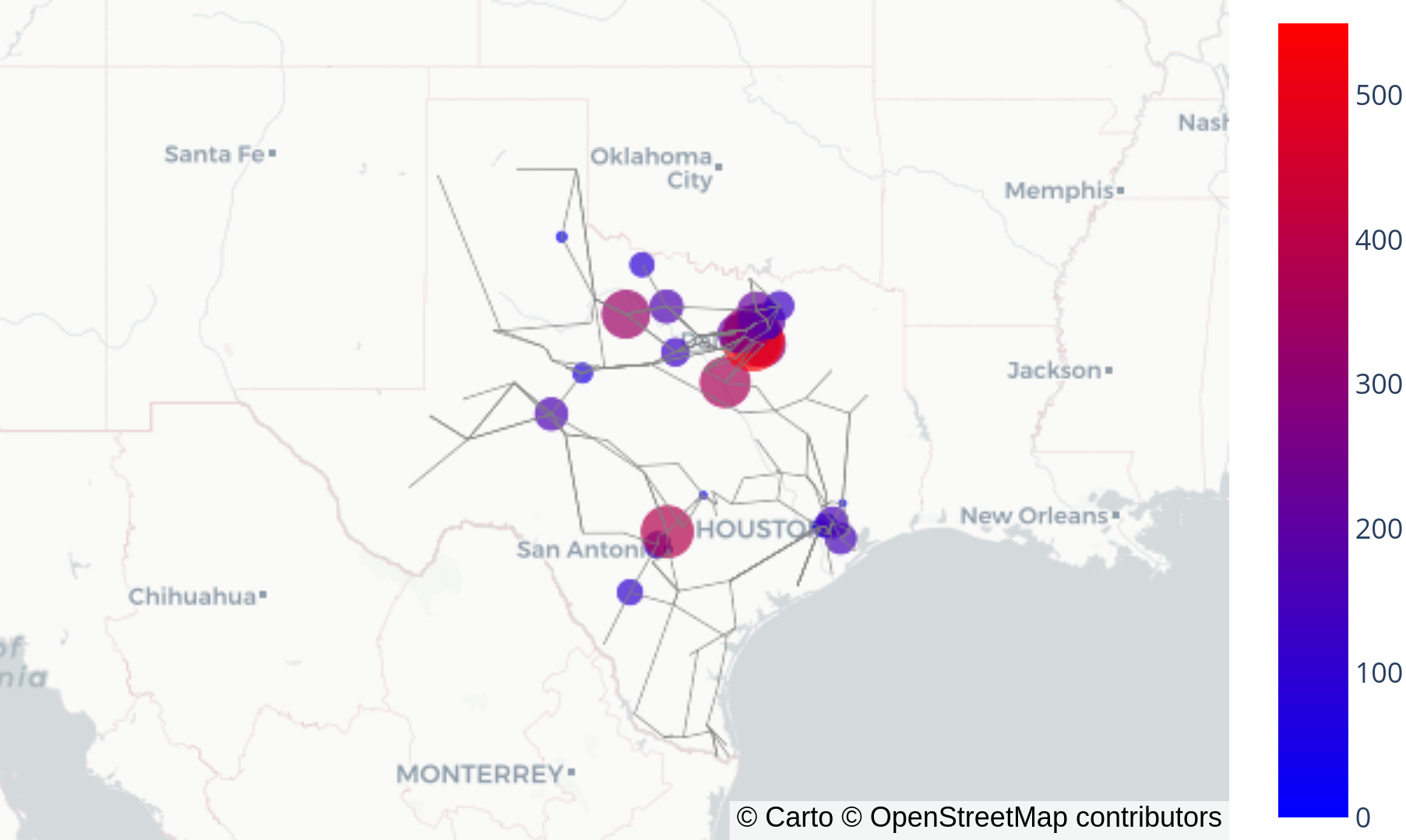}
    \caption{Unserved energy in TNEP+FACTS. Circles indicate the location and amount (in MWh) of average unserved energy.}
    \label{fig:facts_ue}
\end{figure}

    Similarly, Figures \ref{fig:trad_curtail} and \ref{fig:facts_curtail} provide insight into the geographic distribution of mean curtailed energy. Notably, a significant portion of this curtailed energy is concentrated in the western part of Texas, aligning with the areas of renewable capacity depicted in Figure \ref{fig:texas_renewables}. The inclusion of FACTS further leads to a tangible reduction in this curtailed energy.

\begin{figure}[t]
    \centering
    \includegraphics[width=0.80\columnwidth]{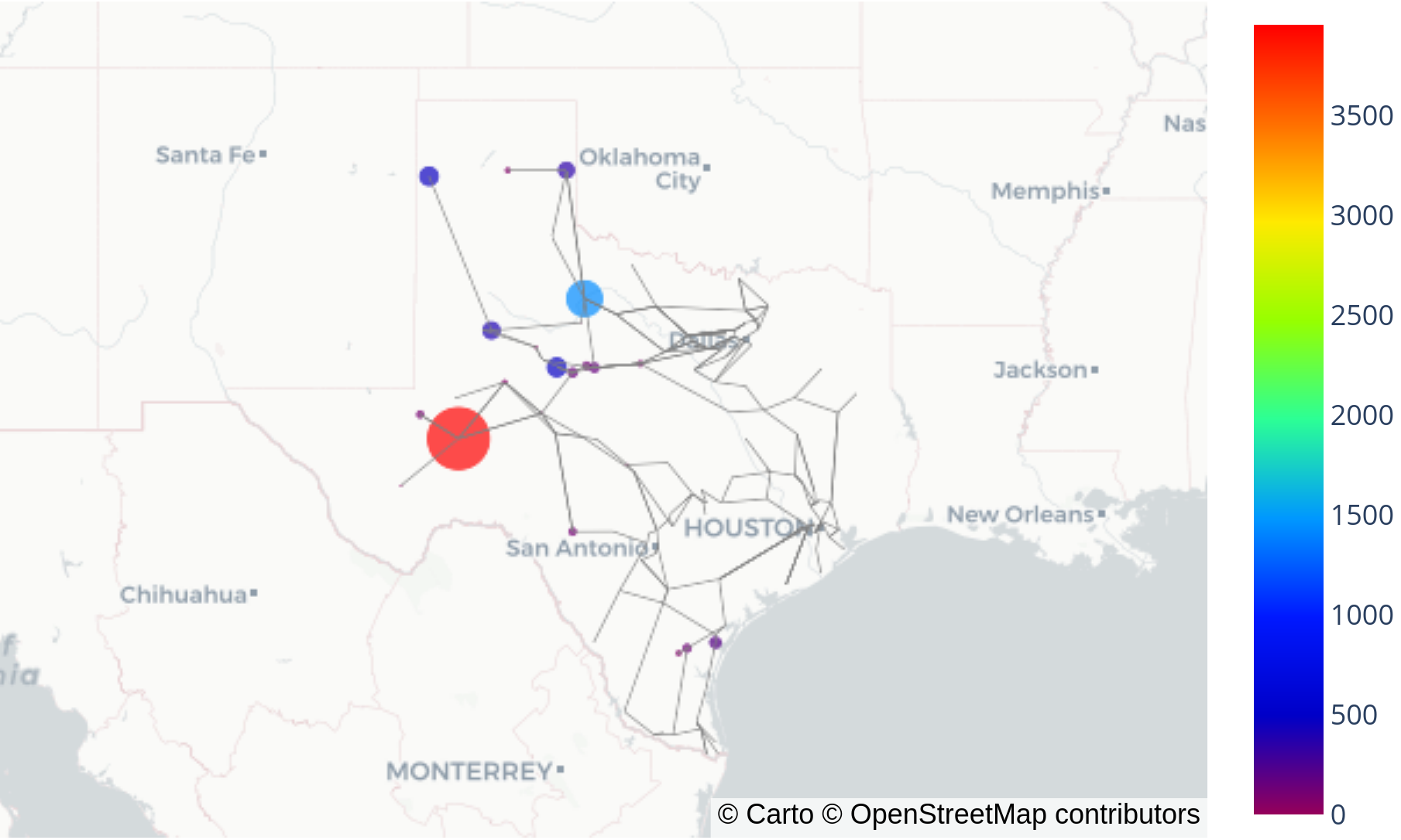}
    \caption{Curtailed energy in baseline TNEP. Circles indicate the location and amount (in MWh) of average curtailed energy.}
    \label{fig:trad_curtail}
\end{figure}

\begin{figure}[t]
    \centering
    \includegraphics[width=0.80\columnwidth]{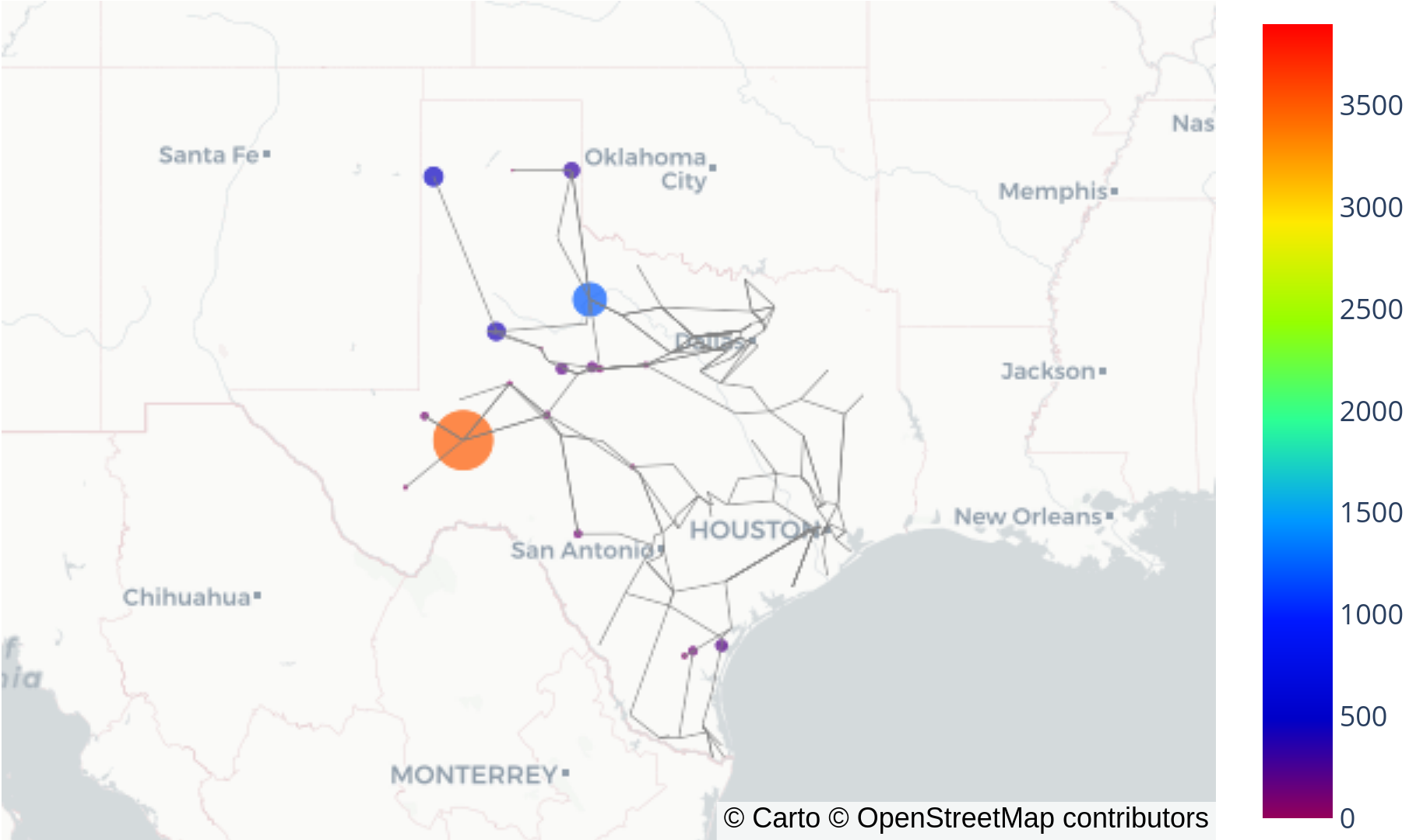}
    \caption{Curtailed energy in baseline TNEP. Circles indicate the location and amount (in MWh) of average curtailed energy.}
    \label{fig:facts_curtail}
\end{figure}

{\em In summary, by collectively analyzing the figures, it becomes
  evident that integrating FACTS to TNEP facilitates the transfer of
  renewable capacity to Texas's major urban centers.}

\section{Conclusion}
\label{sec:conclusion}

The paper has presented a novel MILP formulation for including FACTS
devices in TNEP problems, which directly represents the change in
power flow induced by FACTS devices.  The proposed formulation enjoys
strong theoretical guarantees, namely, it uses facet-defining
constraints instead of weak big-M constraints as in previous
formulations.  Numerical experiments demonstrate the superiority of
the proposed approach, which yields a 4x speedup and a 40x reduction
in branch-and-bound nodes compared to state-of-the-art formulations.
In addition, the results show that FACTS devices have the potential to
reduce load shedding and renewable generation curtailment, by reducing
congestion on the system.

Future work will explore the integration of energy storage technologies within the TNEP+FACTS framework, and conduct experiments across more detailed and larger-scale systems, for which further computational enhancements will likely be needed.
Another interesting direction is to extend the proposed formulation to consider FACTS devices in unit commitment and economic dispatch problems, which underlie most day-ahead and real-time electricity markets in the US.
Tractable formulations for such problems would allow for more efficient markets, and expand the set of control actions available to operators in real-time, showcasing the potential for improved reliability and efficiency of power systems.

\bibliographystyle{IEEEtran}
\bibliography{refs}

\end{document}